\documentclass[conference]{IEEEtran}

\usepackage{amsmath,amssymb,graphicx,cite,algorithm,algcompatible,array,booktabs,multirow,float,amsthm,dsfont}
\usepackage{relsize}

\usepackage{caption}

\newtheorem{proposition}{Proposition}

\def\qeds{\qed\par\medskip}
\def\qedsf{\vskip-8mm\qeds}

\DeclareMathOperator*{\argmax}{argmax}

\newcommand{\mathleft}{\@fleqntrue\@mathmargin0pt}

\makeatletter
\def\ps@IEEEtitlepagestyle{%
  \def\@oddfoot{\mycopyrightnotice}%
  \def\@evenfoot{}%
}
\def\mycopyrightnotice{%
  {\footnotesize 978-1-5386-4505-5/18/\$31.00~\copyright~2018 IEEE\hfill}
  \gdef\mycopyrightnotice{}
}
\makeatother

\begin{document}
    \title{Privacy-preserving smart meter control strategy including energy storage losses}
    
    \author{\IEEEauthorblockN{Ramana R. Avula, Tobias J. Oechtering and Daniel M${\text{\r{a}}}$nsson}
        \IEEEauthorblockA{School of Electrical Engineering and Computer Science\\
            KTH Royal Institute of Technology, Stockholm, Sweden}}
    
    \maketitle
    
    \begin{abstract}
        Privacy-preserving smart meter control strategies proposed in the literature so far make some ideal assumptions such as instantaneous control without delay, lossless energy storage systems etc. In this paper, we present a one-step-ahead predictive control strategy using Bayesian risk to measure and control privacy leakage with an energy storage system. The controller estimates energy state using a three-circuit energy storage model to account for steady-state energy losses. With numerical experiments, the controller is evaluated with real household consumption data using a state-of-the-art adversarial algorithm. Results show that the state estimation of the energy storage system significantly affects the controller's performance. The results also show that the privacy leakage can be effectively reduced using an energy storage system but at the expense of energy loss. 
    \end{abstract}
    \bigskip
    \begin{IEEEkeywords}
        Smart meter privacy, Bayesian hypothesis testing, partially observable Markov decision process (PO-MDP), energy storage losses, dynamic programming
    \end{IEEEkeywords}
    
    %
    \IEEEpeerreviewmaketitle
    
    \section{Introduction}
    \label{sec:intro}
    
    A smart grid (SG) is a next-generation energy network with capabilities to improve grid reliability and efficiency of power generation and distribution with smooth integration of renewable energy sources. In this automated network, a smart meter (SM) is a crucial component which measures the energy consumption of the user and transmits the readings to the utility provider at regular intervals of time. This raises privacy concerns \cite{mcdaniel2009security} since high-resolution readings can allow anyone who has access to this data to infer about consumer's behavior. Since its introduction in \cite{hart1992nonintrusive}, non-intrusive load monitoring (NILM) techniques are known to be quite effective in disaggregating the smart meter readings and thereby detecting the states of most of the general types of household appliances \cite{zoha2012non}. A comparative study was done in \cite{zoha2012non}, which shows that the existing state of the art NILM algorithms are capable of achieving detection accuracy up to 99\% for certain appliance types, which is quite concerning in the privacy context. 
    
    Addressing this issue, several privacy-preserving techniques have been proposed in the literature, which are surveyed in \cite{asghar2017smart,giaconi2018privacy}. Secure communication and cryptographic approaches \cite{lemay2007unified,jawurek2011smart,rial2011privacy} may succeed in preventing the unauthorized third party access, but they would fail to protect the consumer privacy from a greedy authorized or compromised utility provider. A promising physical layer privacy approach is load signature moderation (LSM), where an energy storage system (ESS) is used to moderate the consumer's load profile in order to hide appliances' usage information. LSM using rechargeable battery has previously been investigated in \cite{kalogridis2010privacy,varodayan2011smart, tan2013increasing,backes2014differentially,li2016privacy,chin2017privacy} to obtain optimal control strategy under different privacy settings. However, all the works so far make some ideal assumptions such as instantaneous control without delay, lossless ESS etc. These idealized strategies may provide theoretic performance bounds but the feasibility of such strategies in practical situations must be further investigated. 
    \begin{figure}[t] 
        \begin{minipage}[b]{1.0\linewidth}
            \centering
              \vspace{0.3cm}
            \centerline{\includegraphics[width=1\linewidth]{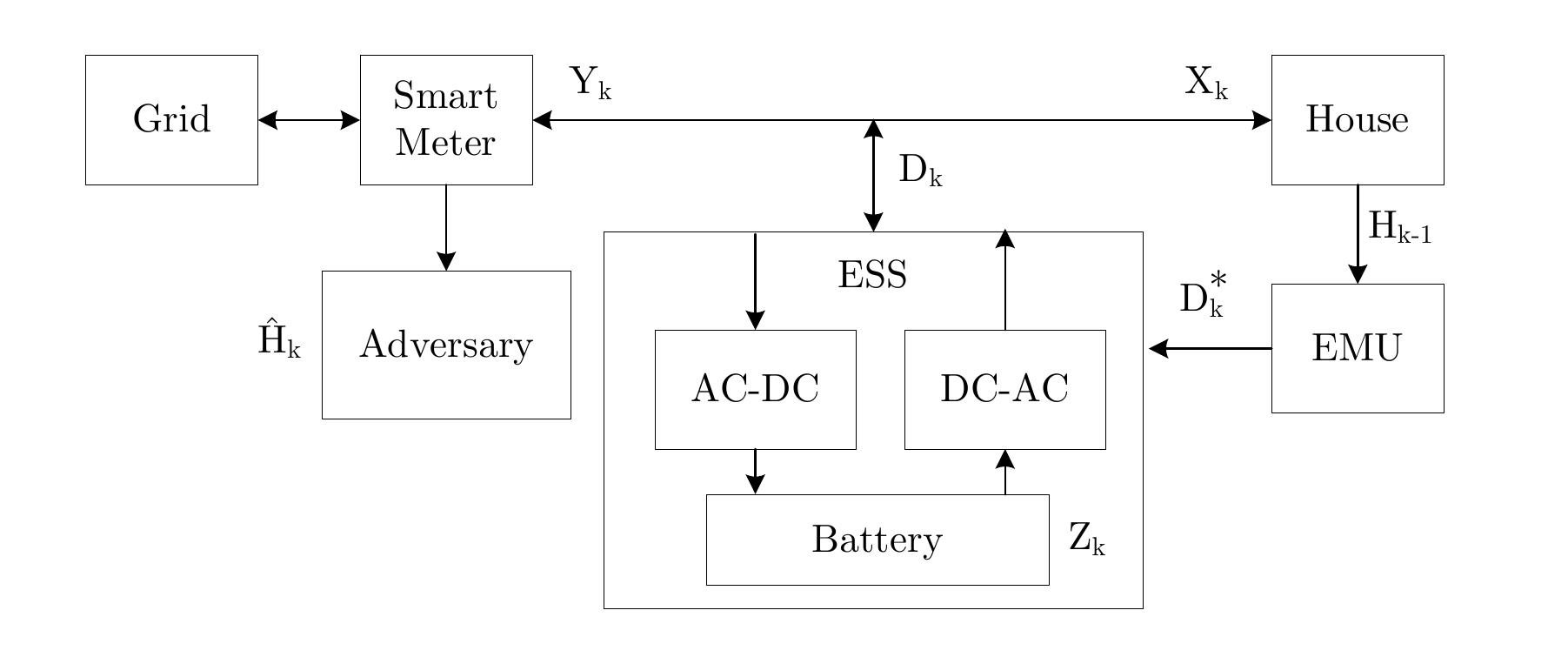}}
              \vspace{0.3cm}
        \end{minipage}
        \caption{Schematic of the proposed smart metering system where the energy management unit controls privacy leakage to an adversary by using energy storage system with a model describing its losses and one-step-ahead predictive control. }
        \label{fig:sys_mod}
    \end{figure}
    In this paper, we present a one-step-ahead predictive control scheme modeled in a PO-MDP framework using an ESS. Similar to \cite{li2016privacy}, we use a privacy metric based on Bayesian risk. The overview of the proposed system is shown in Fig.~\ref{fig:sys_mod}. In this work, we restrict our analysis to the electrochemical battery as an energy storage technology. Nonetheless, the same approach can be followed for other storage technologies by modeling them as their equivalent electrical circuits \cite{pham2017physical}. For a battery system, we present a model describing its losses in power conversion, losses due to internal resistance and self-dissipation. To the best of our knowledge, this is the first work to consider the non-idealities in ESSs in the context of smart meter privacy. 
    
    The rest of the paper is organized as follows. In Section \ref{sec:bat_model}, we present a model for ESS considering the steady state energy losses. We also present the charge and discharge bounds of ESS and also quantify the energy loss associated with a discrete control action. In Section \ref{sec:control}, we present an overview of the system along with the control strategy. In Section \ref{sec:num}, we evaluate the performance of the controller with real household data using a state-of-the-art NILM algorithm. Lastly, we conclude the paper in Section \ref{sec:conc}.
    \section{Energy Storage system Model}
    \label{sec:bat_model}
    Since batteries stores the energy as chemical potential in their electrodes, it can only be interfaced with a DC (Direct Current) system. Hence, power converters are needed to integrate the battery with an AC (Alternating Current) system. The battery along with the power converters form the ESS. In this work, we model the ESS using three simple electrical circuits as shown in Fig.~\ref{fig:bat_mod} to account for the steady state energy losses. Even though several other processes of the ESS such as capacity fade, increase in internal resistance, temperature dependence etc., can also be considered, as a first step, we restrict our focus to the steady state energy losses. In the following, we present our analysis of the three-circuit model in more detail.
    \begin{figure}[h]
        \begin{minipage}[b]{1\linewidth}
            \centering
            \centerline{\includegraphics[width=\textwidth]{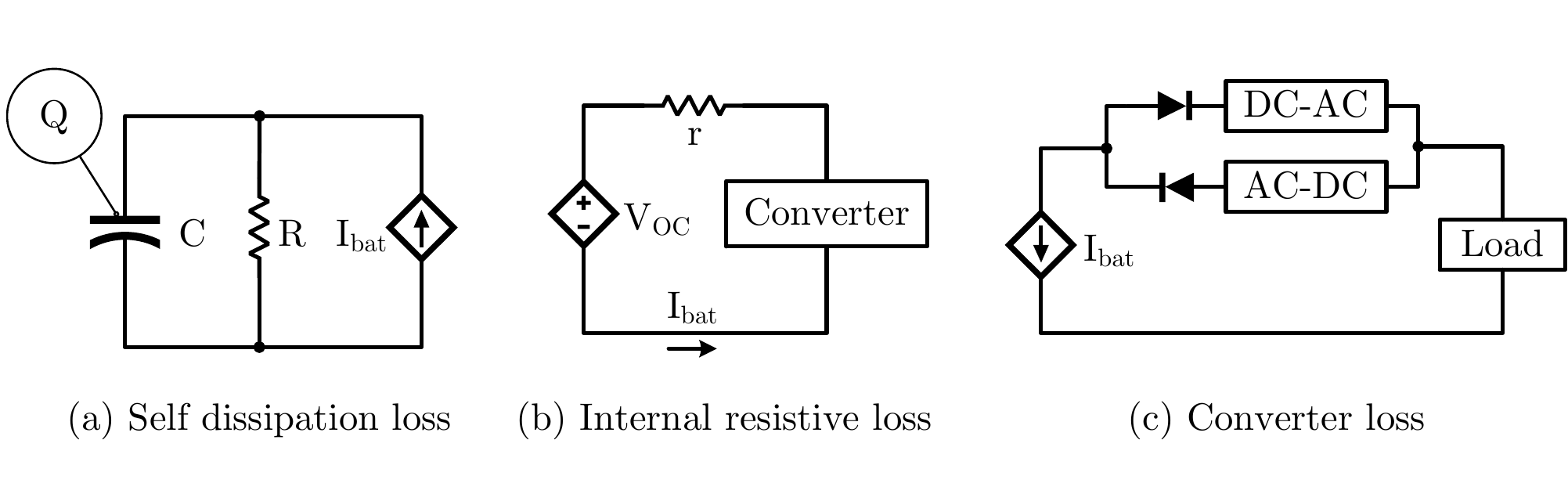}}
        \end{minipage}
        \caption{Three-circuit energy storage system model}
        \label{fig:bat_mod}
    \end{figure}
    \subsection{Losses due to self-dissipation}
    Self-dissipation occurs even if the ESS is not connected to any load. Similar to \cite{chen2006accurate}, we model this phenomenon using an RC circuit as shown in Fig.~\ref{fig:bat_mod}(a).
    The capacitor C holds the charge content, Q of the battery and dissipates through a parallel resistor R. In this circuit, the power converter and load are together represented as a current source controlled by the current flowing in the second circuit shown in Fig.~\ref{fig:bat_mod}(b). Given the self-discharge rate $\gamma$ and a constant current  $\text{I}_{\text{bat}}$ flowing into the battery, the charge content of the battery is updated as
    \begin{gather} \label{eq:self_diss}
    {\text{Q}}_{t+\Delta t} = (1-\gamma)\cdot{\text{Q}}_{t}+\beta\cdot{{\text{I}}_{\text{bat}}}    
    \end{gather}
    \noindent where,
    \begin{gather}
    \gamma = 1- \exp\Big(\frac{-\Delta T}{\text{RC}}\Big); \qquad \beta = \frac{-\gamma \Delta T}{\log(1-\gamma)}
    \end{gather}
    \subsection{Losses due to internal resistance}
    Similar to \cite{chen2006accurate}, the losses that occur in battery during its charging and discharging operations are modeled using a series resistor $r$, as shown in Fig.~\ref{fig:bat_mod}(b). The open circuit voltage of the battery, $\text{V}_{\text{OC}}$ is represented as a voltage source controlled by the capacitor charge Q in the first circuit shown in Fig.~\ref{fig:bat_mod}(a). 
    For an input power of $\text{P}$ from the power converter, the current flowing into the battery is given as
    \begin{equation} \label{eq:int_res_loss}
    {\text{I}}_{\text{bat}} = \frac{\sqrt{\text{V}_{\text{OC}}^{2}+4\text{rP}}-\text{V}_{\text{OC}}}{2\text{r}} 
    \end{equation}
    \subsection{Losses due to power converters}
    As shown in Fig.~\ref{fig:bat_mod}(c), we model power converters as elements with a constant efficiency factor within their operating region. For an input power of $\text{D}$ from the load, the power at the battery terminals can be written as
    \begin{flalign} \label{eq:pow_loss}
    \text{P} &= \text{D}\cdot \left( \eta_{c}\mathds{1}\{\text{D} \ge 0\} +  \eta_{d}^{-1}\mathds{1}\{\text{D} < 0\} \right) \nonumber \\
        &= \text{D}\cdot \delta(\text{D})
    \end{flalign}
    where $\eta_{c}, \eta_{d}$ are the efficiency factors of AC-DC and DC-AC converters respectively and $\mathds{1}\{A\}$ is equal to 1 if A is true, and 0 otherwise. $\delta(\text{D})$ is the common factor for both operations.
    \subsection{Three-circuit ESS model}
    Integrating the three circuits by combining (\ref{eq:self_diss}), (\ref{eq:int_res_loss}) and (\ref{eq:pow_loss}), the controller updates the energy state of the battery evolving over time using the equation given as 
    \begin{equation} \label{eq:upd_eq}
    {\text{Z}}_{t+\Delta t} = (1-\gamma){\text{Z}}_{t}+\frac{\beta \text{V}_{\text{OC}}}{2\text{r}}\left(\sqrt{\text{V}_{\text{OC}}^{2}+4\text{r}\text{D}_{t} \delta_t} - \text{V}_{\text{OC}}\right)
    \end{equation}
    where $\gamma$, $r$ are time-invariant parameters, $\beta$ depends on the time step $\Delta t$ and $\delta_t$ depends on the control variable $\text{D}_{t}$. By limiting the battery current to $\text{I}_{\text{max}}$, from (\ref{eq:int_res_loss}), we have the control space limited as
    \begin{flalign}
    \text{D}_{\text{max}} &= \frac{1}{4\text{r}\eta_d}\left({(\text{V}_{\text{OC}} + 2\text{rI}_{\text{max}} )^2 - \text{V}_{\text{OC}}^2}\right) \label{eq:con2}\\[0.2cm]
    \text{D}_{\text{min}} &= \frac{\eta_c}{4\text{r}}\left({\text{V}_{\text{OC}}^2 - (\text{V}_{\text{OC}} - 2\text{rI}_{\text{max}} )^2}\right) \label{eq:con1}
    \end{flalign}
    Due to the finite energy capacity of the battery $\text{Z}_{max}$, from (\ref{eq:upd_eq}) we have the following constraints on $\text{D}_{t}$:
    \begin{flalign}
    \text{D}_{t,max} &= \frac{\text{V}_{\text{OC}}^2}{4\text{r}\eta_d}\left(\left(\frac{2\text{r}\big[\text{Z}_{max}-(1-\gamma){\text{Z}}_{t}\big]}{\beta\text{V}_{\text{OC}}^2}+1\right)^2 - 1\right) \label{eq:con3} \\[0.2cm]
    \text{D}_{t,min} &= \frac{\eta_c\text{V}_{\text{OC}}^2}{4\text{r}}\left(\left(\bigg[\frac{-(1-\gamma)2\text{r}{\text{Z}}_{t}}{\beta\text{V}_{\text{OC}}^2}+1\bigg]^{+}\right)^2 - 1\right)\label{eq:con4}
    \end{flalign}
    where $[x]^{+}$ is equal to $x$ if $x \ge 0$, and 0 otherwise. In comparision to (\ref{eq:upd_eq}), the energy state of an ideal lossless battery evolves over time as
    \begin{equation} \label{eq:upd_eq_ideal}
    {\text{Z}}_{t+\Delta t,ideal} = {\text{Z}}_{t}+\text{D}_{t}\cdot\Delta t
    \end{equation}
    Using (\ref{eq:upd_eq}) and (\ref{eq:upd_eq_ideal}), the energy loss associated with a discrete control action can be given as
    \begin{equation} \label{eq:engloss}
    \mathcal{E}_{loss}({\text{Z}}_{t},\text{D}_{t}) = {\text{Z}}_{t}+\text{D}_{t}\cdot\Delta t - {\text{Z}}_{t+\Delta t}
    \end{equation}
    \section{System overview and Control strategy}
    \label{sec:control}
    The proposed smart metering system uses an ESS for load signature moderation. The ESS can be placed either in series or in parallel configurations, in between the SM and house as shown in Fig.~\ref{fig:bat_placements}. Both these configurations have been used in the literature for SM privacy. Under ideal assumptions, the two configurations are equivalent. However, considering the energy losses, we have the following proposition.
    \begin{proposition} \label{prop:1}
        The average energy loss in the parallel configuration is strictly less than that of series configuration.
    \end{proposition}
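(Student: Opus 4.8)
The plan is to compare the two topologies of Fig.~\ref{fig:bat_placements} at a single matched operating point: the same household load $L_t$, the same grid power $Y_t$ seen by the smart meter --- so that both realise \emph{identical} privacy leakage and fix the same moderation $D_t=Y_t-L_t$ --- and the same initial energy state $Z_t$. Under this matching the per-step loss in either configuration is $D_t\,\Delta t-(Z_{t+\Delta t}-Z_t)$, so comparing average losses reduces to comparing how much energy each topology actually stores for the same grid input and service. Because the state update (\ref{eq:upd_eq}) is strictly increasing in the battery-terminal power $P$ --- its derivative $\beta\,\text{V}_{\text{OC}}/\sqrt{\text{V}_{\text{OC}}^{2}+4\text{r}P}$ is positive --- it suffices to compare the terminal powers: whichever topology delivers the larger $P$ stores more and hence loses less.

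First I would write the terminal power of each topology. In the parallel case only the moderation $D_t$ is routed through a single converter, so by (\ref{eq:pow_loss}) the terminal power is $P_{\text{par}}=D_t\,\delta(D_t)$ and the per-step loss is exactly $\mathcal{E}_{loss}(Z_t,D_t)$ of (\ref{eq:engloss}); the directly-consumable part $\min(Y_t,L_t)$ flows grid-to-house without any conversion. In the series case the entire grid power first crosses the AC--DC stage ($Y_t\mapsto\eta_c Y_t$) and the entire load is drawn through the DC--AC stage (delivering $L_t$ costs $L_t/\eta_d$ on the DC bus), while only the net current reaches the cell, giving $P_{\text{ser}}=\eta_c Y_t-L_t/\eta_d$.

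Next I would evaluate $P_{\text{par}}-P_{\text{ser}}$ in the two operating regimes. When charging ($Y_t>L_t$, $\delta=\eta_c$) the difference is $L_t\!\left(\eta_d^{-1}-\eta_c\right)$, and when discharging ($Y_t<L_t$, $\delta=\eta_d^{-1}$) it is $Y_t\!\left(\eta_d^{-1}-\eta_c\right)$; in both regimes the common factor $\eta_d^{-1}-\eta_c$ is strictly positive since $\eta_c,\eta_d\in(0,1)$. Hence $P_{\text{par}}>P_{\text{ser}}$ whenever any power is exchanged, and by the monotonicity of (\ref{eq:upd_eq}) the parallel configuration stores strictly more, i.e.\ its per-step loss is strictly smaller at every matched state. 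Taking the expectation over the demand process and using that the throughput $\min(Y_t,L_t)$ is positive with positive probability then delivers the claimed strict inequality for the average loss.

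The step I expect to be the main obstacle is the averaging itself: the pointwise dominance is established at a \emph{common} state $Z_t$, but under their own control policies the two topologies visit different state-of-charge distributions, so passing from the per-step inequality to an inequality between the two stationary average losses needs care --- most cleanly through a coupling of the two state processes, or a comparison argument showing the per-step ordering is preserved by the induced dynamics. A secondary technical point is to justify the series terminal-power expression $\eta_c Y_t-L_t/\eta_d$ from the two-converter topology and to confirm that both matched operating points remain inside the admissible control window set by (\ref{eq:con2})--(\ref{eq:con4}).
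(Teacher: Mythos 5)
Your proposal is correct, and its core is the same throughput argument as the paper's: in the parallel topology only the moderation $D_t=Y_t-X_t$ passes through the lossy ESS components, whereas in the series topology the entire grid power $Y_t$ does. The paper's proof is exactly this observation stated qualitatively in three lines at the level of the average energies $X$ and $Y$, together with the explicit hypothesis that \emph{the battery is never discharged back into the grid}; your derivation is the quantitative version of it (explicit converter bookkeeping, the differences $L_t(\eta_d^{-1}-\eta_c)$ and $Y_t(\eta_d^{-1}-\eta_c)$ in the two regimes, and monotonicity of (\ref{eq:upd_eq}) in the terminal power $P$). Two remarks. First, the no-export hypothesis is not cosmetic, and you use it without stating it: your series expression $\eta_c Y_t - L_t/\eta_d$ presumes $Y_t\ge 0$. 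If $Y_t<0$ the AC--DC stage runs in reverse, the correct series terminal power is $(Y_t-L_t)/\eta_d$, which coincides with the parallel one, so the per-slot inequality degenerates to equality (your regime formula $Y_t(\eta_d^{-1}-\eta_c)$ would even suggest a reversal there, because it is derived from an expression that is invalid for $Y_t<0$). Since the paper's system model elsewhere explicitly allows $Y_k<0$, you should state this restriction just as the paper's proof does. Second, the averaging obstacle you flag --- that the two topologies, run under their own policies, visit different state-of-charge distributions, so per-step dominance at a matched state does not immediately yield dominance of average losses --- is genuine, but the paper does not address it at all: its proof compares average throughputs through the lossy components and stops there. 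On that point your proposal is more careful than the published argument, not less; the coupling or comparison lemma you sketch is what a fully rigorous version would require.
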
 
    \begin{proof}
        Let ${\text{X}}$ be the average energy demand by the house and ${\text{Y}}$ be the average energy request from the grid. Assuming that the energy from the battery is not discharged back into the grid, only $({\text{Y}} - {\text{X}})$ flows through the ESS components in parallel configuration, however, the total energy ${\text{Y}}$ from the grid flows through the ESS components in series configuration, leading to higher energy losses.   
    \end{proof}
    \begin{figure}[t]    
        \begin{minipage}[b]{1.0\linewidth}
            \centering
            \centerline{\includegraphics[width=0.95\linewidth]{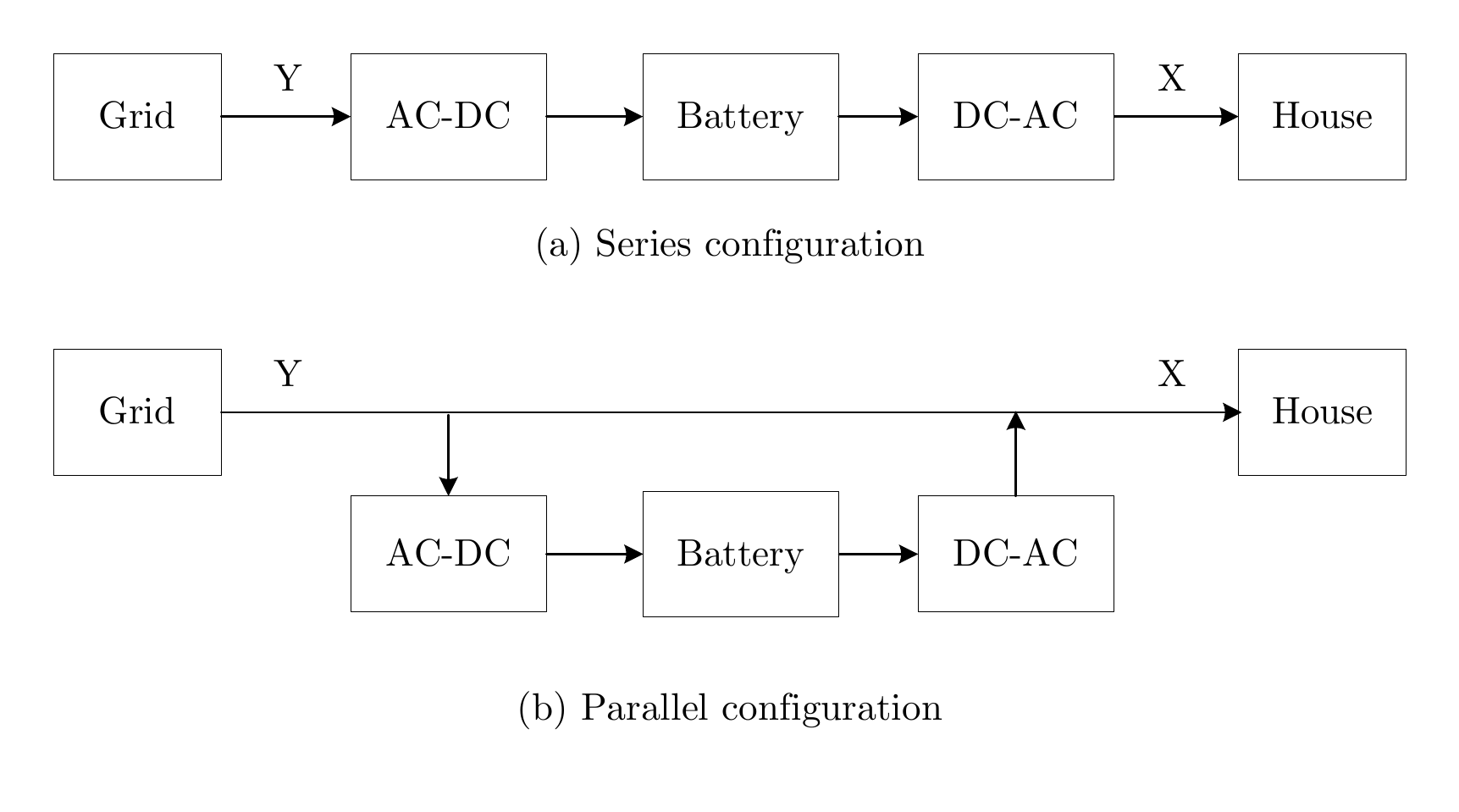}}
        \end{minipage} 
        %
        \caption{Placement of ESS between smart meter and house.}
        \label{fig:bat_placements}
    \end{figure}
    In order to reduce the energy losses, we consider a system with ESS in the parallel configuration as shown in Fig.~\ref{fig:sys_mod}. The discrete time system is controlled for every time slot $k$ within a finite time horizon $\{1,2,\dots,N\}$. Each time slot $k$ is of a fixed time duration $T$. Let $e$ and $q$ be the resolution of energy and power measurements respectively. In the following analysis, we use the capital letters to denote random variables, their realizations by the lower-case letters and the range space by calligraphic letters.
    
    For each time slot $k$, $X_{k}$ denotes the aggregate power drawn by all the appliances in the house and is defined on $\cal{X}$$ = \{0,q,2q,$ $\dots,x_{max}\}$. $Z_{k}$ defined on $\cal{Z}$$ = \{0,e,2e,\dots,z_{max}\}$ denotes the energy available in the battery. The power drawn by the ESS is denoted as $D_{k}$ and it is the control variable which is defined on $\mathcal{D} = \{-d_{min},\dots,-q,0,q,\dots,d_{max}\}$, where $d_{min}$ and $d_{max}$ are the maximum discharge and charge power of the ESS respectively, given in (\ref{eq:con2}) and (\ref{eq:con1}). $D_{k}^{*}$ defined on $\cal{D}$ denotes the desired battery power consumption scheduled by the energy management unit (EMU). In the presence of an ESS, the SM records the aggregate power demands of consumer and ESS. In this work, we allow the energy from the battery to be discharged to the grid resulting in negative values of SM measurements. It is represented by the random variable $Y_{k} = X_{k}+D_{k}$ which is defined on $\mathcal{Y} = \{-d_{min},\dots,-q,0,q,\dots,x_{max}+d_{max}\}$. $H_{k}$ defined on $\cal{H}$ denotes the $n$-ary joint hypothesis of all the appliances in the house and $\hat{H}_{k}$ defined on $\cal{H}$ denotes the hypothesis detected by the adversary having access to the consumer's statistical and real-time data as well as the control strategy employed by the EMU. 
    
    \subsection{Bayesian risk}
    Similar to \cite{li2016privacy}, we use detection-theoretic approach by formulating the smart meter privacy problem into an adversarial Bayesian hypothesis testing where an adversary having access to the consumer's statistical and real-time data tries to make a guess on the hypothesis state using a decision strategy. In the Bayesian formulation, each of the hypothesis test outcomes is assigned a cost and the decision strategy that minimizes the average decision-making cost will be employed by the adversary \cite{varshney2012distributed}. The average cost or \emph{Bayesian risk} function, $\mathcal{R}$, is given as 
    \begin{equation} \label{eq:risk}
    \mathcal{R}_{k}=\sum_{i,j \in \mathcal{H}^2} C_{i,j}\cdot P(\hat{H}_{k}=i\mid H_{k}=j)\cdot P(H_{k}=j) 
    \end{equation} 
    where $C_{i,j}$ is the cost of deciding $\hat{H}_{k}=i$ when $H_{k}=j$ is true. By setting the cost of a correct decision to zero and the cost of an error to unity, the risk function gives the average error probability of an adversarial detection strategy. 
    
    In this work, the \emph{accumulated minimum Bayesian risk} (AMBR) is chosen as a privacy metric, which is given as 
    \begin{equation} \label{eq:risk_acc}
    \text{AMBR} = \sum_{k=1}^{N}\mathcal{R}_{k}^{*}
    \end{equation}
    where $\mathcal{R}_{k}^{*} = \min \{\mathcal{R}_{k}\}$. The AMBR is a good choice for measuring privacy due to its operational meaning. It explicitly characterizes the best possible detection performance achievable by any adversary. 
    
    \subsection{Control strategy}
    Ideally, the controller uses all the information available until time $k$ (denoted as $\mathcal{I}_{k}$) to choose an action $d_{k+1}$. However, as described in \cite{krishnamurthy2016partially}, since $\mathcal{I}_{k}$ is increasing in dimension with $k$, its sufficient statistic given by the posterior distribution of the Markov chain $H_{k}$ conditioned on $\mathcal{I}_{k}$ (denoted as $\pi_{k}$) is used instead of $\mathcal{I}_{k}$ to choose the action $d_{k+1}$. For a given initial battery state $z_{0}$, this posterior distribution forms a \emph{information state} or \emph{belief state} at time $k$, given as
    \begin{flalign}
    \pi_{k}(i) &= {P}(H_{k}=i\mid\mathcal{I}_{k}) \nonumber \\
    &= {P}(H_{k}=i\mid\pi_{k-1},x_{k})\label{eq:belief_eq}
    \end{flalign}
    where $\mathcal{I}_{k} = \{z_{0},\pi_{0},x_{1},y_{1},\pi_{1},\dots,\pi_{k-1},x_{k}\}$. The control system is modeled as a PO-MDP controlled sensor, as shown in Fig.~\ref{fig:con_mod}, by making the following assumptions:
    \begin{itemize}
        \item The hypothesis of the house $H_{k}$ evolves over time following a first-order Markov chain with a time-invariant transition probability $P_{H_{k}\mid H_{k-1}}$.
        \item The controller observes the Markov chain $H_{k}$ only through a noisy measurement $X_{k}$ made with a time-invariant observation probability $P_{X_{k}\mid H_{k}}$.
        \item The control signal  $D_{k}^{*}$ is generated by the controller using time-dependent control strategy $P_{Y_{k}\mid X_{k-1},Z_{k-1},\Pi_{k-1}}$.\\
    \end{itemize}
    \begin{figure}[t] 
        \begin{minipage}[b]{1\linewidth}
            \centering
            \centerline{\includegraphics[width=0.85\linewidth]{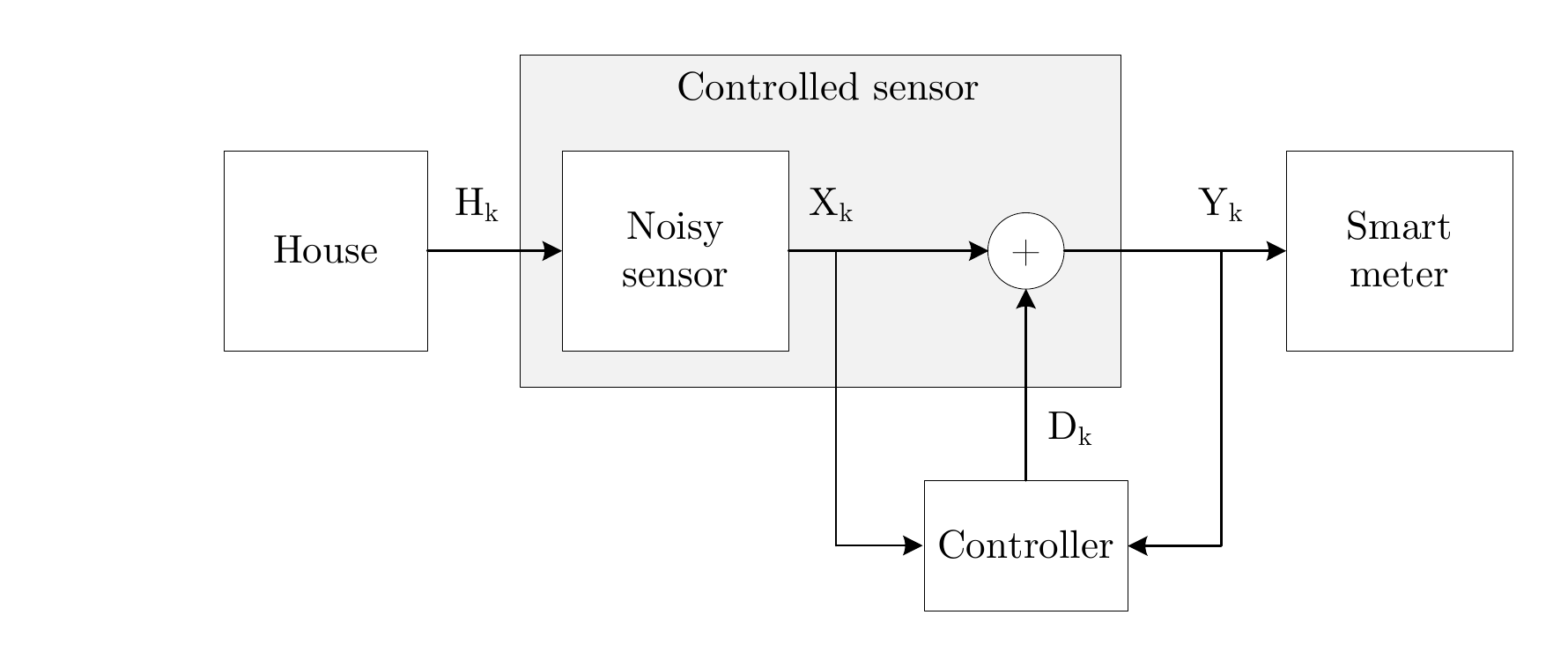}} 
        \end{minipage}
        \caption{EMU modelled as a PO-MDP controlled sensor.}
        \label{fig:con_mod}
        \vspace*{-0.2cm}
    \end{figure}    
    Given the initial energy state, $z_0$, the controller estimates the state of ESS at any time $k$ using the equation
    \begin{flalign}
    z_{k} &= f(z_{k-1},d_k) \label{eq:bat_eq}
    \end{flalign}
    where $d_k = y_k-x_k$ and $f$ is a deterministic function of ESS model given by (\ref{eq:upd_eq}). As described in \cite{varshney2012distributed}, the minimum Bayesian risk function based on our assumptions is given as
    \begin{align*} 
    &\mathcal{R}_{k}^{*}(\pi_{k-1},z_{k-1},\mu_{k}) = \sum_{y \in \mathcal{Y}} \min_{\hat{h} \in \mathcal{H}} \Bigl\{\sum_{g,h,x \in \mathcal{H}^2\times\mathcal{X}}C(\hat{h},h)\cdot \nonumber\\
    & \qquad \qquad \qquad {P}_{Y_{k}\mid X_{k-1},Z_{k-1}}(y\mid x,z) \cdot {P}_{X_{k-1}\mid H_{k-1}}({x\mid g})\cdot \nonumber \\
    & \qquad \qquad \qquad {P}_{H_{k}\mid H_{k-1}}({h\mid g}) \cdot {P}_{H_{k-1}}({g})\Bigr\} 
    \end{align*} 
    For the finite time horizon, the optimal control strategy is the solution to the nonlinear optimization problem with objective function given as
    \begin{flalign} \label{eq:risk_acc1}
    \mu^* = \argmax_{\{\mu_{1},\cdots,\mu_{N}\}}\sum_{k=1}^{N}\mathcal{R}_{k}^{*}(\pi_{k-1},z_{k-1},\mu_{k}) 
    \end{flalign} 
    subject to the constraints given as
    \begin{flalign*}
    P_{Y_{k}\mid Z_{k-1}}(y_k\mid z_{k-1}) = 0 \text{ if } 
    \begin{cases}
    y_{k} < {d}_{k,min} \\
    \qquad \text{ or}  \\
    y_{k} > {d}_{k,max} + x_{max}
    \end{cases}
    \end{flalign*}
    
    The \emph{belief state space} (denoted as $\Pi$) is a $\vert\mathcal{H}\vert-1$ dimensional unit simplex. Solving this optimization problem requires discretization of $\Pi$ in order to get a finite set. The optimization variable in (\ref{eq:risk_acc1}) is of dimension $N\times \vert\mathcal{Y}\vert\times \vert\mathcal{X}\vert$ and solving it in its original form is computationally complex as the dimensionality of the the problem increases with $N$. This can be formulated into a recursive dynamic programming problem as given in the following proposition, the proof of which follows from \cite{krishnamurthy2016partially}.
    \begin{proposition} \label{prop:2}
        For the finite horizon PO-MDP with model given in Section \ref{sec:control}, the optimal control strategy $\mu^{*} = \{\mu_{1}^{*},\mu_{2}^{*},$ $\dots,\mu_{N}^{*}\}$ is the solution to the following backward recursion: Initialize $\mathcal{V}_{N}(\pi,z)$ and then for $k=N-1,\dots,1$ iterate
        \begin{align*}
        & L_{k}(\pi_{k-1},z_{k-1},\mu_{k})= \mathcal{R}_{k}^{*}(\pi_{k-1},z_{k-1},\mu)  ~+  \nonumber \\
        & \qquad \qquad \qquad \qquad \qquad\sum_{x_{k},y_{k} \in \mathcal{X}\times\mathcal{Y}} \mathcal{V}_{k+1}(\pi_{k},z_{k}) \cdot P(x_{k},y_{k})  \nonumber \\
        & \mathcal{V}_{k}(\pi_{k-1},z_{k-1}) = \max_{\mu_{k}}\Bigl\{L_{k}(\pi_{k-1},z_{k-1},\mu_{k})\Bigr\}  \nonumber \\
        & \delta_{k}^{*}(\pi_{k-1},z_{k-1}) = \argmax_{\mu_{k}}\Bigl\{L_{k}(\pi_{k-1},z_{k-1},\mu_{k})\Bigr\}     
        \end{align*}
        \qedsf    
    \end{proposition}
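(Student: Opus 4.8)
The plan is to recognize that the augmented pair $(\pi_{k},z_{k})$, formed by the belief state from (\ref{eq:belief_eq}) and the deterministic battery state from (\ref{eq:bat_eq}), is an \emph{information state} (sufficient statistic) for the controlled process. This reduces the PO-MDP to a fully observed MDP on the product space $\Pi\times\mathcal{Z}$, after which I would invoke Bellman's principle of optimality on the finite-horizon objective (\ref{eq:risk_acc1}) and obtain the claimed backward recursion by induction on the stage index, following \cite{krishnamurthy2016partially}.

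First I would establish the sufficiency of $(\pi_{k},z_{k})$. The battery update (\ref{eq:bat_eq}) makes $z_{k}$ a deterministic function of $(z_{k-1},d_{k})=(z_{k-1},y_{k}-x_{k})$, while the belief update (\ref{eq:belief_eq}) shows $\pi_{k}$ is obtained from $\pi_{k-1}$ and the new observation $x_{k}$ through a Bayes (HMM filtering) recursion driven by the time-invariant kernels $P_{H_{k}\mid H_{k-1}}$ and $P_{X_{k}\mid H_{k}}$. The key consequence to verify is that the immediate risk $\mathcal{R}_{k}^{*}$ and the one-step transition depend on the full information set $\mathcal{I}_{k-1}$ only through $(\pi_{k-1},z_{k-1})$, which is precisely the controlled-Markov structure the dynamic-programming argument requires.

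Second, I would define the value function (optimal reward-to-go) $\mathcal{V}_{k}(\pi_{k-1},z_{k-1})$ as the maximum over the remaining controls $\{\mu_{k},\dots,\mu_{N}\}$ of the tail sum $\sum_{j=k}^{N}\mathcal{R}_{j}^{*}$, conditioned on the current augmented state, and prove the recursion by backward induction. At the terminal stage, $\mathcal{V}_{N}$ is the given initialization. Assuming the claim holds at stage $k+1$, the principle of optimality splits the tail objective into the immediate reward $\mathcal{R}_{k}^{*}(\pi_{k-1},z_{k-1},\mu_{k})$ plus the expected value-to-go; taking the expectation over the random pair $(x_{k},y_{k})$ that drives both state updates yields $\sum_{x_{k},y_{k}}\mathcal{V}_{k+1}(\pi_{k},z_{k})\,P(x_{k},y_{k})$, which is exactly $L_{k}$. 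Maximizing $L_{k}$ over $\mu_{k}$ then produces $\mathcal{V}_{k}$ and its maximizer $\delta_{k}^{*}$, and collecting $\{\delta_{1}^{*},\dots,\delta_{N}^{*}\}$ reconstructs $\mu^{*}$.

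The main obstacle is the first step: rigorously verifying that $(\pi_{k},z_{k})$ is a genuine information state, i.e., that the filtering recursion remains valid under the controlled dynamics and that the joint law of $(x_{k},y_{k})$ used in the expectation factors so that the future evolution is conditionally independent of $\mathcal{I}_{k-1}$ given $(\pi_{k-1},z_{k-1},\mu_{k})$. Once this separation and sufficiency are in place, the remaining backward induction is the routine finite-horizon dynamic-programming argument, so I would keep that portion brief and cite \cite{krishnamurthy2016partially} for the standard details.
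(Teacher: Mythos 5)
Your proposal is correct and takes essentially the same approach as the paper: the paper gives no explicit proof, stating only that the result follows from \cite{krishnamurthy2016partially}, and the argument you outline---sufficiency of the augmented information state $(\pi_{k},z_{k})$ (belief state plus deterministic battery state), reduction to a fully observed MDP on $\Pi\times\mathcal{Z}$, and finite-horizon backward induction via Bellman's principle---is precisely the standard PO-MDP dynamic-programming argument that citation invokes. Your write-up is in fact more explicit than the paper's one-line deferral, but it is the same route.
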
 
    
    With the designed optimal strategy  $\mu^{*}$, a real-time PO-MDP controller is implemented as shown in the following algorithm.
    \begin{algorithm} [h]
        \caption{Realtime PO-MDP controller}
        \begin{algorithmic}[1]
            \renewcommand{\algorithmicrequire}{\textbf{Initialisation}:}
            \REQUIRE $\pi_{0},z_{0}$
            \FOR {$k = 1$ to $N$}
            \STATEx \textit{Pre-process} :
            \STATE Choose action $y_{k}^{*} = \mu_{k}^{*}(\pi_{k-1},z_{k-1})$
            \STATEx \textit{ESS control} :
            \IF {($y_{k}^{*} < x_{k} + d_{k,min}$)}
            \STATE Limit $y_{k} = x_{k} + d_{k,min}$
            \ELSIF {($y_{k}^{*} > x_{k} + d_{k,max}$)}
            \STATE Limit $y_{k} = x_{k} + d_{k,max}$
            \ELSE 
            \STATE Allow $y_{k} =y_{k}^{*}$
            \ENDIF
            \STATEx \textit{Post-process} :
            \STATE Update the belief state $\pi_{k} = T(\pi_{k-1},x_{k})$
            \STATE Update the ESS state $z_{k} = f(z_{k-1},y_{k}-x_{k})$
            \ENDFOR
        \end{algorithmic} \label{alg:controller}
    \end{algorithm}
    \section{Numerical Experiments}
    \label{sec:num}
    The simulation experiments to validate our control scheme are implemented in MATLAB using real household consumption data from ECO reference dataset \cite{beckel2014nilm}. The control strategy is obtained by solving the optimization problem using the nonlinear programming solver based on interior point algorithm \cite{byrd2000trust}. We simulate a scenario where the controller is tasked to protect the events of a water kettle every day between 8 AM and 9 AM. The controller chooses an action every minute by observing the real-time appliance consumption data. For this objective, a 12V 100Ah lithium-ion battery is selected, which can sufficiently satisfy the power requirements of the kettle. To simplify the problem, we assume a fixed $\text{V}_{\text{OC}}$ equal to the nominal battery voltage. The parameters used in the simulation are listed in Table \ref{tab:simparam}. The Markov chain probabilities of the PO-MDP control model are estimated from 30 days of labeled training data and listed in the Table \ref{tab:conparam}.   
    \subsection{Visualization of control actions}
    With this setup, the control actions for different initial states of the battery are simulated and are shown in Fig.~\ref{fig:fig_bat_control}. Due to the measurement quantization, switching events are noticed as peaks in the smart meter measurements as shown in Fig.~\ref{fig:fig_sm_readings}. These residual peaks are informative to an adversary operating with high precision measurements. However, the cardinality of the state space increases with measurement precision, which increases the dimensionality of the optimization problem by $\mathcal{O}(n^2)$. Fig.~\ref{fig:fig_bat_soc} shows the evolution of the state of charge (SOC) of the battery due to control actions. It is interesting to notice that without any design objective on the battery state, the control scheme is steering the battery towards the full charge state. This result in the degradation of controller's performance which is discussed in the following. 
    \begin{table}[t]
        \captionsetup{justification=centering}
        \caption{Simulation parameters}
        \label{tab:simparam}
        \centering
        \begin{tabular}{| l | l | c |} 
            \hline
            \textbf{Parameter} & \textbf{Symbol} & \textbf{Value} \\ 
            \hline
            Max. appliance power demand                         & $x_{\text{max}}$ (W)         & 1700  \\ 
            Time slot duration                         & T (s)                     & 60  \\
            Time horizon length                         & N                     & 60 \\
            Power measurement resolution                        & q (W)                 & 500  \\
            Energy measurement resolution                         & e (Wh)                 & 5  \\ 
            Battery nominal voltage                 & $\text{V}_{\text{nom}}$ (V)     & 12  \\ 
            Battery capacity                         & $\text{Q}_{\text{max}}$ (Ah)    & 100  \\ 
            Max. allowed battery charging current                & $\text{I}_{\text{max}}$ (A)     & 80  \\ 
            Max. allowed battery discharging current            & $\text{I}_{\text{min}}$ (A)    & 80  \\ 
            Battery internal resistance                & r ($\Omega$)             & 0.006  \\ 
            Battery self-discharge rate                  & $\gamma$ (\%/month)                & 3  \\ 
            Power converter efficiency                 & $\eta_c, \eta_d$ (\%)             & 95  \\ 
            \hline
            Cardinality of $\mathcal{X}$             & $\vert\mathcal{X}\vert$        & 4  \\ 
            Cardinality of $\mathcal{Y}$             & $\vert\mathcal{Y}\vert$        & 8  \\ 
            Cardinality of $\mathcal{Z}$             & $\vert\mathcal{Z}\vert$        & 241  \\ 
            Cardinality of $\mathcal{H}$             & $\vert\mathcal{H}\vert$        & 2  \\  
            Cardinality of ${\Pi}$                     & $\vert{\Pi}\vert$                & 11  \\  
            Max. allowed battery input power         & $\text{D}_{\text{max}}$         & +2q  \\ 
            Min. allowed battery input power         & $\text{D}_{\text{min}}$         & -2q  \\ 
            \hline 
            ESS model parameter                        & $\beta$        & 0.017  \\ 
            \hline
        \end{tabular}
        \vskip\baselineskip
        \caption{PO-MDP control parameters}
        \label{tab:conparam}
        \centering
        \renewcommand{\arraystretch}{1}
        \begin{tabular}{| l | c |} 
            \hline
            \textbf{Parameter} & \textbf{Value} \\ 
            \hline
            $C_{i,j}$                        &$\begin{bmatrix}0 &1\\1 &0\end{bmatrix}$             \\ [2mm]
            $\pi_{0}$                        &$\begin{bmatrix}0.95 & 0.05\end{bmatrix}^{T}$                    \\ [2mm]
            $P(H_{k}|H_{k-1})$                 &$\begin{bmatrix}0.98 &0.34\\0.02 &0.65\end{bmatrix}$ \\ [2mm]
            $P(X_{k}|H_{k})$                &$\begin{bmatrix}1 &0 &0 &0\\0 &0.17 &0.14 &0.17\end{bmatrix}^{T}$ \\ [2mm]
            \hline
        \end{tabular}
    \end{table}
    
    \subsection{Evaluation of controller's performance}
    To evaluate the performance of the controller, we simulate an adversary using NILM algorithm. In particular, we simulated Weiss' algorithm \cite{weiss2012leveraging} which extracts switching events from the aggregate smart meter data and assigns each event to the appliance with the best match in a signature database. This algorithm is implemented using NILM toolbox developed by \cite{beckel2014nilm}. We use 30 days of labeled training data to create the signature database. Weiss’ algorithm utilizes three-dimensional consumption data (i.e., real, reactive, and distortion powers) in order to match an event signature. We tested its detection performance by injecting the controlled battery current in-phase to the supply voltage resulting in corrupted real power measurements. The accuracy of the adversarial detection is measured using F-score, which is given as
    \begin{equation}
    \text{F-score} = \frac{1}{1+(\text{FN}+\text{FP})/(2\text{TP})} \label{eq:fscore}
    \end{equation}
    where $\text{FN}$, $\text{FP}$ and $\text{TP}$ denote false negative, false postive and true positive respectively. The F-score lies between 0 and 1, where F-score = 0 indicates no detection and F-score = 1 indicates complete detection. 
    \begin{figure}[t]
        \vskip \baselineskip
        \begin{minipage}[b]{1\linewidth}
            \centering
            \centerline{\includegraphics[width=1\linewidth]{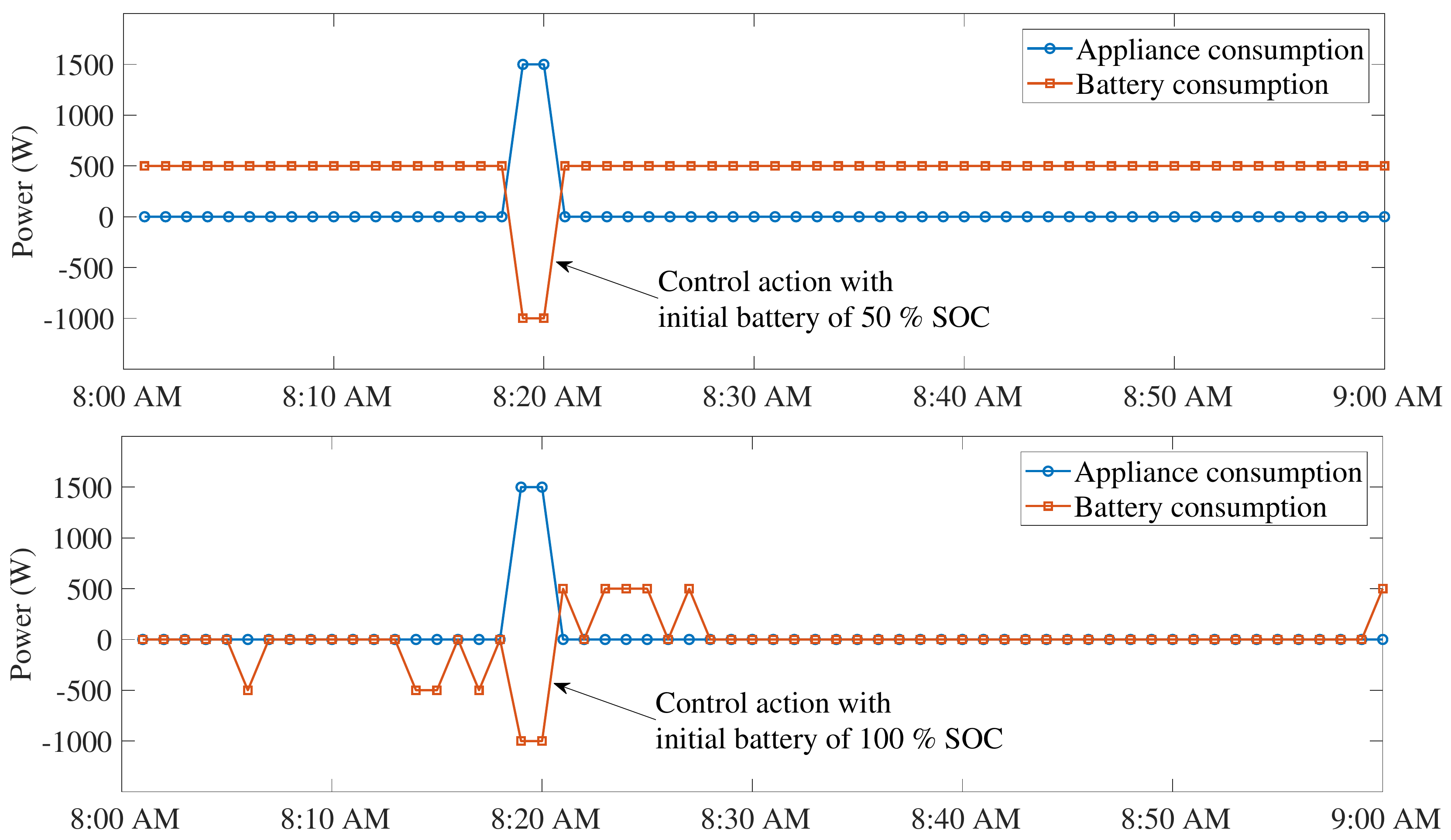}} 
        \end{minipage}
        \caption{Control actions of battery.}
        \label{fig:fig_bat_control}    
        \vskip \baselineskip
        \begin{minipage}[b]{1\linewidth}
            \centering
            \centerline{\includegraphics[width=1\linewidth]{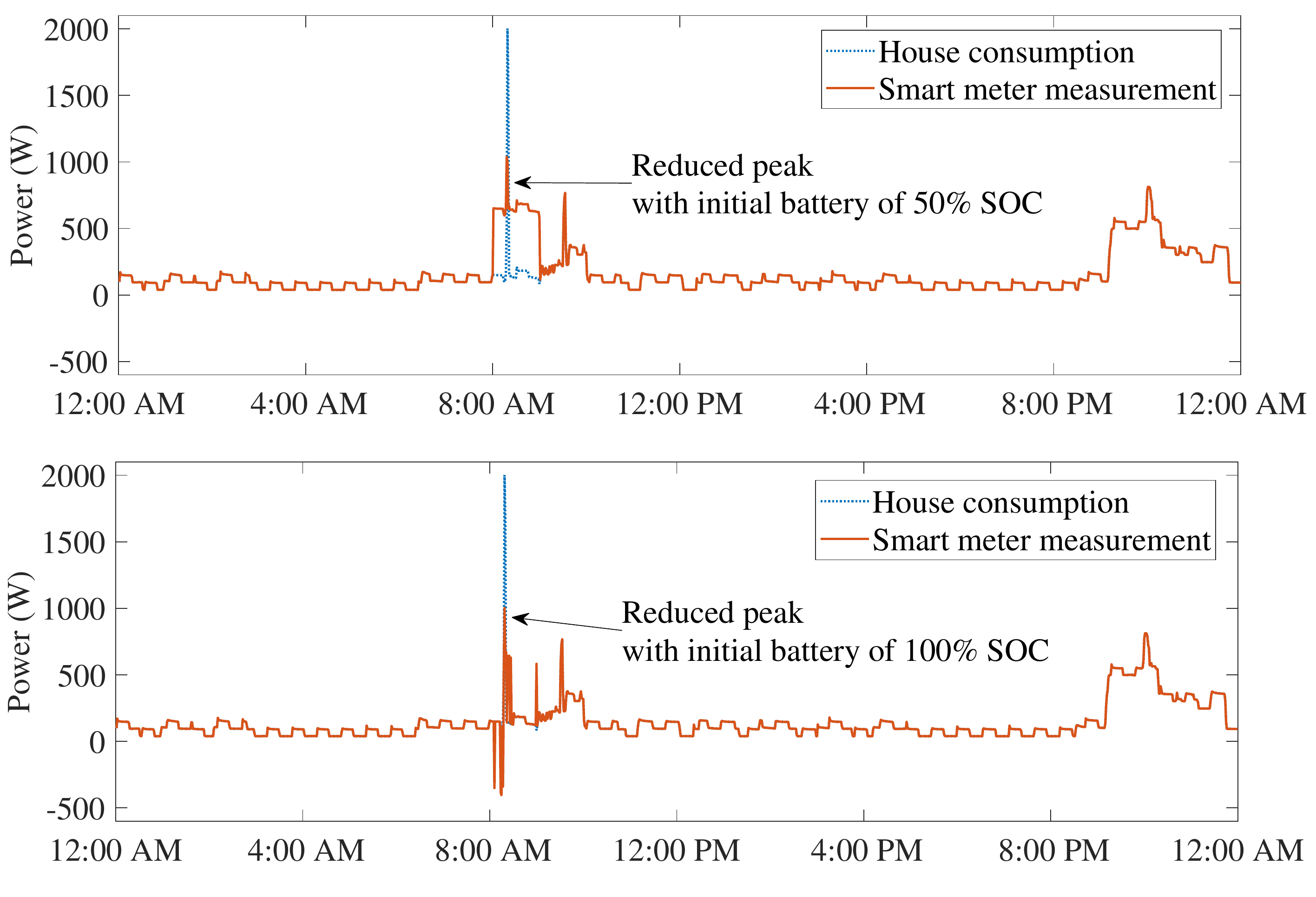}} 
        \end{minipage}
        \caption{Smart meter readings vs household consumption.}
        \label{fig:fig_sm_readings}
        \vskip \baselineskip
        \begin{minipage}[b]{1\linewidth}
            \centering
            \centerline{\includegraphics[width=0.96\linewidth]{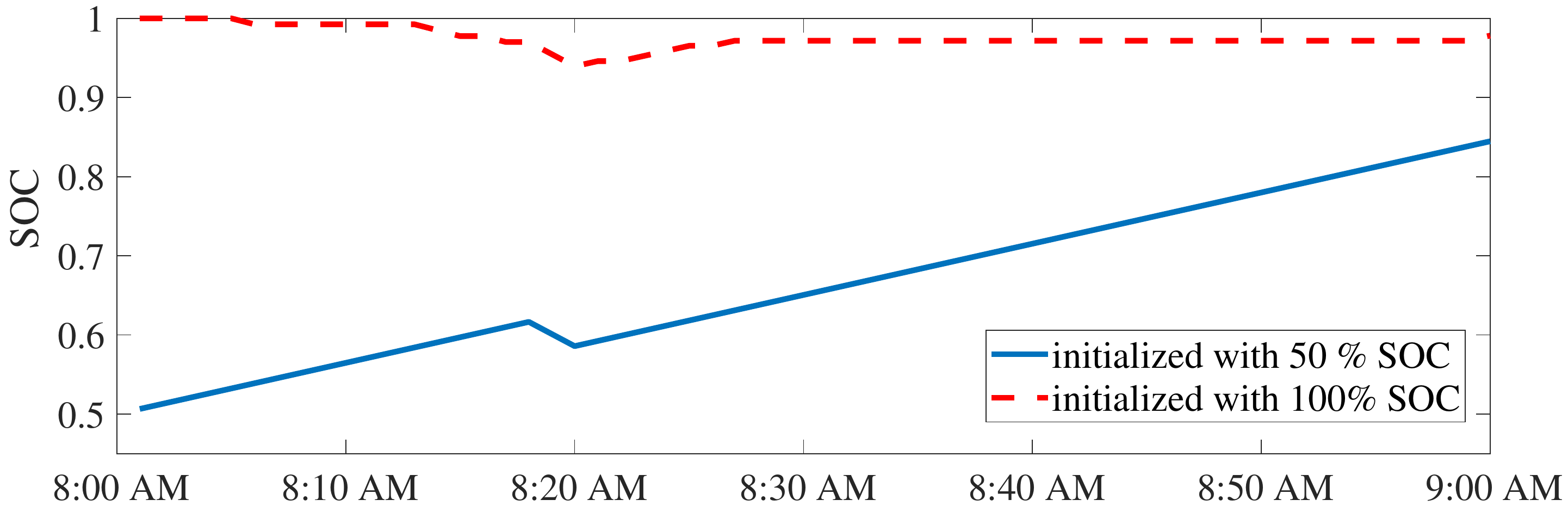}} 
        \end{minipage}
        \caption{Evolution of SOC of battery due to control actions.}
        \label{fig:fig_bat_soc}
    \end{figure}
    
    The Weiss's algorithm is simulated under different test conditions using 30 days of validation data and the obtained average F-scores, energy losses and the AMBR are listed in Table~\ref{tab:conperf}. It can be seen from the results that the AMBR and the F-score are correlated. The test case without a battery resulted in the highest F-score. While using a battery, the controller is able to reduce the F-score significantly. However, it can be observed that the ability of the controller to preserve privacy depends on the initial state of the battery. For this simulation setup, the controller performed better when initialized with a battery of 25\%-50\% SOC compared to full charge. This indicates that if the battery state is steered towards 25\%-50\% SOC by the end of the control time horizon, it would result in better performance for the next control horizon. However, this improved performance is achieved at the cost of increased energy loss. 
    \subsection{ESS model comparision}
    For the simulated battery, Fig.~\ref{fig:batmodelcomp} shows the difference between the \% change in the state of charge of the battery estimated by three-circuit model and an ideal lossless model for different input powers. The model difference is particularly significant at high power levels. Due to very low $\gamma$ for electrochemical batteries, the difference in state estimation is negligible when comparing three-circuit models with and without considering self-dissipation. However, for energy storage systems with high self-dissipation rate such as flywheels, the self-dissipation phenomenon cannot be neglected. 
    \begin{table}[t]
        \caption{Evaluation of controller against NILM algorithm with different initial battery states}
        \label{tab:conperf}
        \centering
        \renewcommand{\arraystretch}{1}
        \begin{tabular}{| c | c | c | c |} 
            \hline
            \textbf{Initial battery SOC (\%)}     & \textbf{F-score}     & \textbf{Energy loss (Wh)}  & \textbf{AMBR}                   \\ 
            \hline
            0                                 & 0.1333             & 40.421                     & 152.57                        \\ 
            \hline
            25                                 & 0.0357             & 36.217                     & 153.51                        \\ 
            \hline
            50                                 & 0.0357             & 36.230                    & 153.51                        \\ 
            \hline
            75                                & 0.2667             & 26.770                    & 153.50                        \\ 
            \hline
            90                                 & 0.4833             & 14.174                    & 153.49                          \\ 
            \hline
            100                                 & 0.6333             & 9.779                     & 148.89                         \\ 
            \hline    
            Without battery                        & 0.7931             & 0                            & 0                                \\ 
            \hline
        \end{tabular}
    \end{table}  
    \begin{figure}[h]
        \begin{minipage}[b]{1\linewidth}
            \centering
            \centerline{\includegraphics[width=0.8\linewidth]{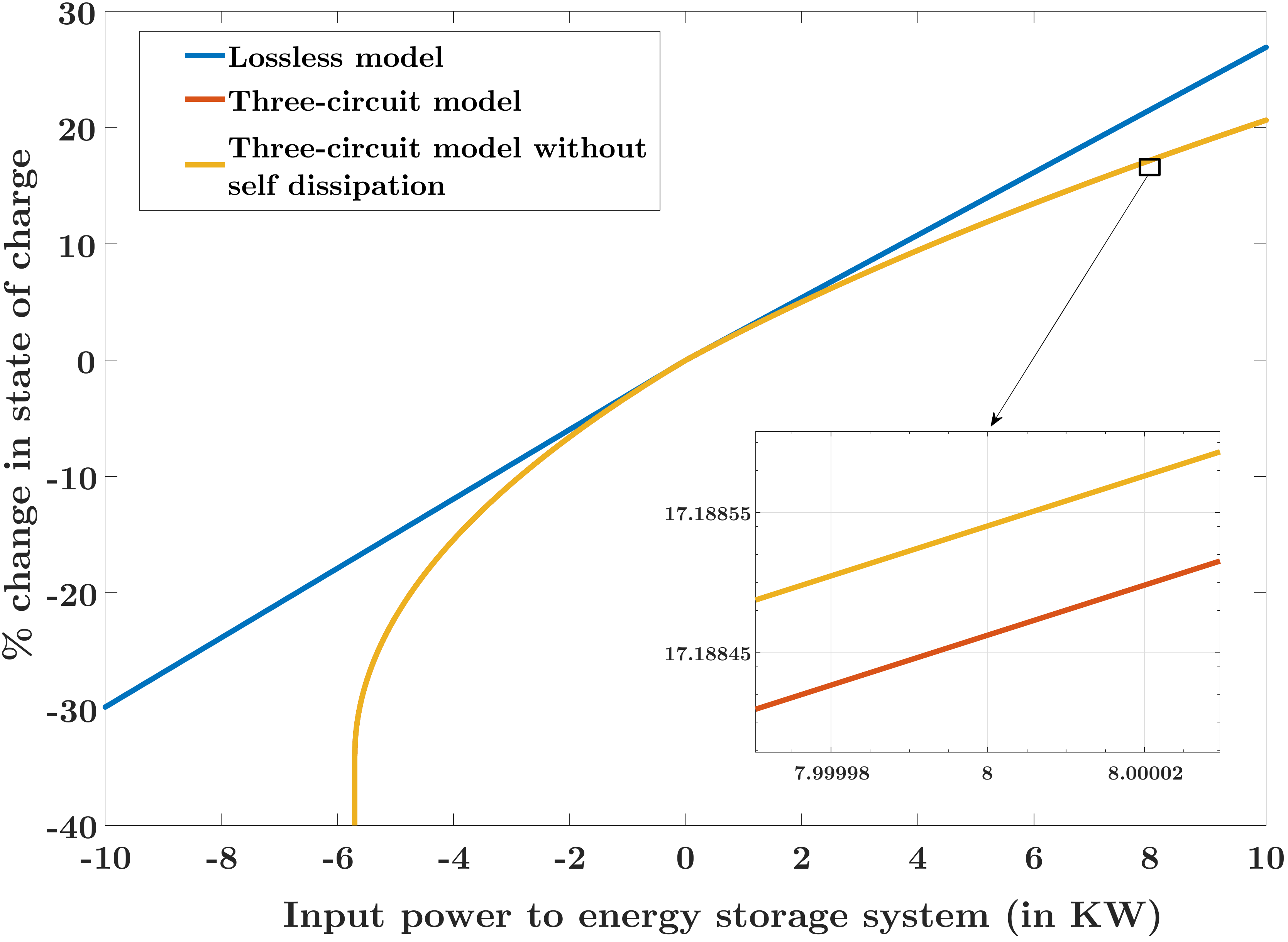}} 
        \end{minipage}
        \caption{Comparision of ESS models at 50\% SOC.}
        \label{fig:batmodelcomp}
    \end{figure}
    \section{Conclusion}
    \label{sec:conc}
    In this paper, we presented a privacy-preserving control scheme based on Bayesian risk and a three-circuit model to estimate the energy loss associated with a control action. The controller is modeled as a PO-MDP controlled sensor to maximize the Bayesian risk function of an adversarial hypothesis testing and the resulting nonlinear optimization objective is solved in a backward recursion. Extensive numerical experiments were carried out to evaluate the performance of the controller thoroughly. Especially, we tested the controller's performance against a state-of-the-art NILM algorithm using real energy consumption data. We investigated the effect of the initial state of the energy storage system on the controller's performance. An important conclusion from this work is that the privacy leakage can be reduced by using an energy storage system but at the expense of energy loss. Without an accurate model, the error in state estimation propagates and if not corrected, leads to suboptimal privacy control.
    
    Future work will focus on the trade-off between the privacy and energy loss, time dependency of the model parameters as well as control strategy and including more energy storage technologies. 
    \bibliographystyle{ieeetran}
    \bibliography{refs}
\end{document}